
\documentclass[10pt,conference]{IEEEtran}

\IEEEoverridecommandlockouts                              



\usepackage{graphics} 
\usepackage{epsfig} 
\usepackage{mathptmx} 
\usepackage{times} 
\usepackage{amsmath} 
\usepackage{amssymb}
\usepackage{url}
\usepackage{booktabs}
\usepackage{graphicx}
\usepackage{siunitx}
\usepackage{mathtools}
\usepackage{blindtext}
\usepackage{centernot}
\usepackage{algorithm}
\usepackage{algorithmic}
\usepackage{svg}
\usepackage{bbm}

\DeclareMathOperator*{\argmax}{arg\,max}
\DeclareMathOperator*{\argmin}{arg\,min}

\newtheorem{theorem}{Theorem}

\def\BibTeX{{\rm B\kern-.05em{\sc i\kern-.025em b}\kern-.08em
    T\kern-.1667em\lower.7ex\hbox{E}\kern-.125emX}}
    
\IEEEoverridecommandlockouts\IEEEpubid{\makebox[\columnwidth]{ 978-1-6654-3540-6/22~\copyright~2022 IEEE \hfill} \hspace{\columnsep}\makebox[\columnwidth]{ }}
    
\begin{document}

\title{Optimal Offloading Strategies for Edge-Computing via Mean-Field Games and Control\\
\thanks{This work has been co-funded by the LOEWE initiative (Hesse, Germany) within the emergenCITY center and the German Research Foundation (DFG) via the Collaborative Research Center (CRC) 1053 – MAKI.}
}

\makeatletter
\newcommand{\linebreakand}{%
  \end{@IEEEauthorhalign}
  \hfill\mbox{}\par
  \mbox{}\hfill\begin{@IEEEauthorhalign}
}
\makeatother

\author{\IEEEauthorblockN{ Kai~Cui, Mustafa~Burak~Yilmaz, Anam~Tahir, Anja~Klein, Heinz~Koeppl}
\IEEEauthorblockA{\textit{Department of Electrical Engineering and Information Technology} \\
\textit{Technische Universität Darmstadt}\\
Darmstadt, Germany \\
{\tt\small \{kai.cui, burak.yilmaz, anam.tahir, anja.klein, heinz.koeppl\}@tu-darmstadt.de}}
}

\maketitle

\begin{abstract}
The optimal offloading of tasks in heterogeneous edge-computing scenarios is of great practical interest, both in the selfish and fully cooperative setting. In practice, such systems are typically very large, rendering exact solutions in terms of cooperative optima or Nash equilibria intractable. For this purpose, we adopt a general mean-field formulation in order to solve the competitive and cooperative offloading problems in the limit of infinitely large systems. We give theoretical guarantees for the approximation properties of the limiting solution and solve the resulting mean-field problems numerically. Furthermore, we verify our solutions numerically and find that our approximations are accurate for systems with dozens of edge devices. As a result, we obtain a tractable approach to the design of offloading strategies in large edge-computing scenarios with many users. 
\end{abstract}


\begin{IEEEkeywords}
edge-computing, mean-field, computation offloading, Nash equilibria, Pareto optima
\end{IEEEkeywords}

\section{Introduction}
In recent years, a rapid growth of data generated from the network edge is witnessed, especially, the Cisco Annual Internet Report 2020 forecasts a rapid deployment of billions of Machine To Machine (M2M) devices until 2023 \cite{cisco20}. Multi-access Edge Computing (MEC) is a key technology to compensate strictly limited M2M devices in their processing by enabling computation offloading to cloudlet servers with computation resources in their vicinity. Additionally, the number of User Edge devices (UE) like smartphones, tablets and laptops also have increased tremendously due to the ease of their availability and low costs. These devices can also gain from offloading their tasks that demand intensive computations and low latencies, e.g., virtual reality, real-time face recognition, natural language processing, to such cloudlets. 

A MEC system can be seen as a multi-agent system where each UE is an agent who needs to decide, for each arriving task, whether to offload it to the MEC server or not. There has been great interest in finding the optimal policy for these UEs, to either offload or process locally, depending on factors such as their own available resources, network conditions and offloading computation costs. Even though several computation offloading strategies between UEs and MEC servers have been proposed in the literature, finding scalable solutions for MEC multiagent systems remains an important problem considering the continuously increasing number of agents.

Mean-field approximations are becoming an increasingly popular approach to resolve the curse of dimensionality in large-scale multi-agent systems \cite{zhang2021multi}. 
The idea is to represent a many-agent system as a single-representative-agent system which interacts with the empirical state distribution (or the mean-field) of all the other agents in the system, making the system tractable. 
It has been shown that the solution to a mean-field system is a good approximation to an $N$-agent system as $N$ grow large \cite{huang2006large}.
There are two main categories in which agents can work in a multi-agent system, either cooperatively to maximize a global goal or competitively to maximize their own reward, or a combination of both.
Mean-Field Games (MFGs, \cite{lasry2007mean, gueant2011mean, gomes2010discrete, saldi2018markov}) provide a way to analyze and solve large-scale competitive problems in a tractable manner, in particular under the usage of learning-based techniques \cite{guo2019learning, elie2020convergence}. On the other hand, Mean-Field Control (MFC) is used to model cooperative settings in many-agent systems \cite{andersson2011maximum, bensoussan2013mean}, see \cite{kizilkale2014collective, djehiche2016risk, lee2021controlling, fouque2020deep}, and references therein for some interesting applications. Similarly, we will formulate offloading in edge-computation as both a one-shot problem with theoretical guarantees and alternatively a time-stationary problem, allowing for competitive and cooperative solutions in a unified, tractable manner.

\begin{figure}
    \centering
    \includegraphics[width=0.95\linewidth]{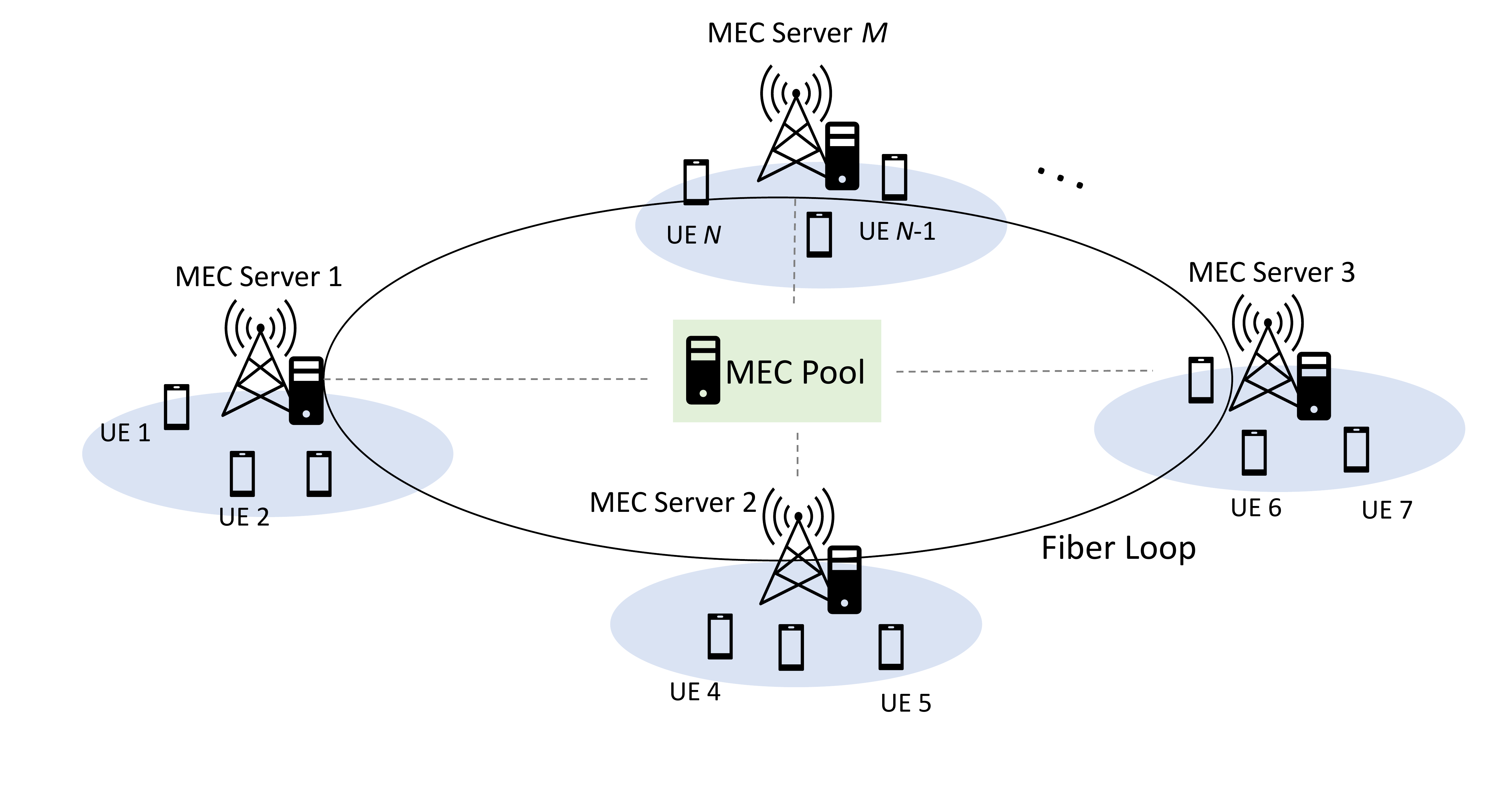}
    \caption{MEC scenario with $N$ UEs offloading their tasks to $M$ MEC servers, where computation resources of MEC servers are shared through a fiber loop connecting them, acting as a single processing pool.}
    \label{fig:system_model}
\end{figure}

The decision to offload tasks to MEC has been an important area of research since it is crucial to find a good, if not optimal, policy in scalable applications.
Various prior works have used mean-field approximations for similar offloading problems. In \cite{zheng2019optimal}, the authors model a shared MEC competitive offloading problem as an MFG in continuous time and solve the resulting coupled Hamilton-Jacobi-Bellman and Fokker-Plank partial differential equations using fixed-point iteration. However, their model considers only the non-cooperative case and results in a continuous time model, while in our work we also consider a cooperative setting and obtain a model for the time-stationary case.
Similarly, in \cite{banez2019mean, banez2020mean}, the authors consider both cooperative and non-cooperative computational offloading problems, though through the special case of a linear-quadratic model, while we solve a non-linear problem. 

Mean-field approximations have also been used to model large-scale MEC systems with D2D collaborations \cite{wang2021joint} particularly on graphs as a deterministic ODE system, though the model does not consider entirely selfish nodes. Finally, authors in \cite{hanif2015mean} model and solve a large-scale resource-sharing problem using mean-field theory and both cooperative and non-cooperative strategies, i.e. a more centralized setting without local computational capabilities. In contrast to our work, their models focus on graph-based job forwarding and continuous-action resource-sharing problems, whereas our model focuses on offloading decisions. For a variety of other works on applications of the mean-field approach in communication systems, see also e.g. \cite{gao2020mean, zheng2021dynamic, li2020resource, banez2021survey}. Apart from the discussed differences, all of the models in prior works and our work are diverse and apply to a variety of differing scenarios. Here, in contrast to previous work, our model will consider the both cooperative and competitive optimization of binary offloading decisions in edge-computing, where users may choose to offload or compute locally. In this work, we will formulate a unified mean-field framework for both the competitive and cooperative setting of offloading decisions in edge-computing. In particular, our tractable solution considers both a one-shot and time-stationary scenario that is rigorously approximating the finite user system. 

We begin our analysis with a computationally expensive-to-solve one-shot game, a knapsack problem, where many edge devices must independently decide whether to offload or not for a given distribution of task configurations. Our contribution can be summarized as follows: (i) We pass to the infinite-user limit in order to obtain a tractable problem with a complexity independent of the number of users; (ii) The model is theoretically motivated by showing novel existence and approximate optimality properties of solutions in large finite systems, both in terms of Nash equilibria and Pareto optima; (iii) Since, in practice, a one-shot game may not be sufficiently realistic, we extend our model to a new time-stationary model with a Poisson task arrival process and find analogous competitive and cooperative solutions in the limit of large systems; and finally (iv), the proposed models are verified in simulation and solved or learned with complexity independent of the number of users, while concurrently giving a good solution to large finite systems. As a result, we tractably solve an otherwise intractable many-agent offloading problem. 

\section{Mathematical Model}
In the considered scenario, a multi-cell ultra-dense network includes $M$ MEC servers and a total number of $N$ UEs associated with these MEC servers. It is assumed that $M$ MEC servers are connected through a fiber loop to share their computational resources, pooled into a single centralized but distributed MEC pool, where we also assume the bandwidth across the MEC servers is large enough for connecting all UEs and the delay in resource sharing through fiber loop is neglected, similar to the framework presented in~\cite{zheng2019optimal}. The scenario is depicted in Fig.~\ref{fig:system_model} with $M$ MEC servers and $N$ UEs. Each UE, modeled by $i=1,\ldots,N$, is given a configuration 
\begin{align}
    C^i \coloneqq (W^i, L^i, f^i, R^i) \in \mathbb K \subseteq \mathbb R_{\geq 0}^4
\end{align} 
with task transmission length $W^i$ in bits, task processing complexity $L^i$ in CPU cycles to be computed (either by offloading to a MEC server or by computing locally), transmission rate $R^i$ between user and MEC server measured in bits per second, and finally a local processing rate $f^i$ in CPU cycles per second. Each user $i$ may make a decision $X_i \in \{0, 1\}$ of whether to offload their task. 

For a given overall MEC server processing rate $f_{\mathrm{pool}} \in \mathbb R$, we assume each offloading UE is allocated a proportional amount of processing power from the MEC processing pool to their offloaded task's complexity. The time $T^{\mathrm{tx}}_i$ to transmit the $i$-th users' task to a MEC server and the time $T^{\mathrm{off}}_i$ to compute at the MEC server are given by
\begin{align}
    T^{\mathrm{tx}}_i = \frac{W^i}{R^i}, \quad
    T^{\mathrm{off}}_i = \frac{L^i}{f_{\mathrm{pool}}\cdot\frac{1}{\sum_j X_j}}
\end{align}
assuming that each offloading task is assigned equal processing power. Alternatively, one could also easily consider completion of all offloaded tasks at once, i.e. $T^{\mathrm{off}}_i = \frac{X_j \sum_j L^j}{f_{\mathrm{pool}}}$. The time to compute a task locally is given by 
\begin{align}
    T^{\mathrm{loc}}_i = \frac{L^i}{f^i}.
\end{align} 

The result of the computed tasks are assumed to be negligible in size compared to original task size $W^i$, therefore, the time needed for the reception of results is not considered.

\subsection{Competitive game setting}
In the competitive, selfish setting, each user $i$ independently decides on whether to offload or not via the random variable $X_i \in \{0, 1\}$ so as to minimize only their own expected computation time, i.e. either time to compute locally or offload
\begin{multline} \label{eq:exactcomp}
    \mathbb P (X_i = 1) \cdot (T^{\mathrm{tx}}_i + T^{\mathrm{off}}_i) + \mathbb P (X_i = 0) \cdot T^{\mathrm{loc}}_i \\
    = \mathbb E \left[ X_i (T^{\mathrm{tx}}_i + T^{\mathrm{off}}_i) + (1-X_i) T^{\mathrm{loc}}_i \right]
\end{multline}

Under full information, we obtain a standard static game with the classical solution concept of mixed Nash equilibria: Each of the users chooses whether to offload according to a policy $\pi_i$ which gives the conditional probability of offloading
\begin{align}
    \mathbb P(X_i = 1 \mid C^1, \ldots, C^N)
    \equiv \pi_i(C^1, \ldots, C^N)
\end{align}
which results in the minimization objective of each user $i$,
\begin{align}
    J_i^N({\pi_1}, \ldots, {\pi_N}) = \mathbb E \left[ X_i (T^{\mathrm{tx}}_i + T^{\mathrm{off}}_i) + (1-X_i) T^{\mathrm{loc}}_i \right].
\end{align}  

An approximate $\varepsilon$-Nash equilibrium is now defined as a tuple of policies $(\pi_1, \ldots, \pi_N)$ such that no user can gain by unilaterally changing their policy, i.e. for any $i = 1, \ldots, N$,
\begin{align}
    &J_i^N(\pi_i, \pi_{-i}) \leq \max_{\pi \in \Pi} J_i^N(\pi, \pi_{-i}) + \varepsilon
\end{align}
where $\pi_{-i}$ denotes all policies other than the $i$-th policy. Here, the minimal such $\varepsilon$ is also referred to as the exploitability of policies. An exact Nash equilibrium for $\varepsilon=0$ is indeed guaranteed to exist as long as $\mathbb K$ is compact (e.g. \cite{fudenberg1991game}). 

Unfortunately, it is known that the computation of Nash equilibria is hard, see \cite{daskalakis2009complexity}. Instead, we shall consider the many-agent case through mean-field analysis to obtain a tractable solution for large systems. At the same time, the solution will consist of decentralized policies. To this end, we shall now assume that there exists an underlying distribution $\mu_0 \in \mathcal P(\mathbb K)$ of user specifications, i.e. for $i=1,\ldots,N$ we have random variables $C^i = (W^i, L^i, f^i, R^i) \sim \mu_0$. To obtain a reasonable solution, we must also assume that the MEC pool computing power scales suitably with the number of users, i.e. $f_\mathrm{pool} = N \cdot f_\mathrm{per}$ for some $f_\mathrm{per} \in \mathbb R$, since otherwise in the limit of many agents, offloading will become pointless. In practice, for fixed $f_\mathrm{pool}$ and $N$ in given finite $N$-agent systems, this may be realized by defining $f_\mathrm{per} \coloneqq \frac{f_\mathrm{pool}}{N}$.

We now consider a decentralized control setting by allowing each agent to decide whether to offload depending only on their own configuration $C_i$. For motivation, note that since all other agents are exchangeable from the perspective of a single agent, only the own state and overall distribution of behaviors of other agents matters. Furthermore, in the limit of $N \to \infty$, the other users' distribution is uninformative, since under a common offloading strategy, the distribution converges to some fixed mean-field by the law of large numbers. Additionally, decentralized control policies may be motivated in practice by limited agent information. Since the computation of Nash equilibria in this setting nonetheless remains hard, this motivates the mean-field formulation.

For tractability, we formulate a mean-field game as $N \to \infty$, as popularized by \cite{huang2006large} and \cite{lasry2007mean} for stochastic differential games. Here, we propose a mean-field model with near-Nash properties as $N$ grows large, as we will also verify theoretically. Consider a policy $\pi$ shared by all users. The policy induces a joint distribution $\mu = \mu_0 \otimes \pi$ over user states and offloading decisions. Under this fixed distribution $\mu$, the objective of a single, representative user becomes 
\begin{align}
    &J^\mu({\pi}) = \mathbb E \left[ X (\tilde T^{\mathrm{tx}} + \tilde T^{\mathrm{off}}) + (1-X) \tilde T^{\mathrm{loc}} \right]
\end{align}
where we have expectations of random variables of the representative agent $(W, L, f, R, X) \sim \mu \otimes \pi$, and random transfer or processing times of the mean-field system $W, L, f, R$
\begin{align}
    \tilde T^{\mathrm{tx}} = \frac{W}{R}, \quad \tilde T^{\mathrm{off}} = \frac{L \int X \, \mathrm d\mu}{f_{\mathrm{per}}}, \quad \tilde T^{\mathrm{loc}} = \frac{L}{f}.
\end{align}

The $N \to \infty$ analogue of Nash equilibria is the mean-field equilibrium, defined as a tuple $(\pi^*, \mu^*)$ of policy and mean-field, such that the policy is optimal under the mean-field generated by itself, i.e. defined through the fixed point equation
\begin{subequations} \label{eq:MFG}
    \begin{alignat}{2}
        \pi^* &= \argmin_\pi J^{\mu^*}({\pi}), \\
        \mu^* &= \mu_0 \otimes \pi^*.
    \end{alignat}
\end{subequations}
Analytically, for any fixed mean-field $\mu$, we could find such a best response policy $\operatorname{BR}(\mu)$ by defining
\begin{align}
    \pi^*(W, L, f, R) = \mathbf 1_{\tilde T^{\mathrm{tx}}(W, L, f, R) + \tilde T^{\mathrm{off}}(W, L, f, R) < \tilde T^{\mathrm{loc}}(W, L, f, R)}.
\end{align}
However, simply iterating the two fixed point equations is generally not guaranteed to converge to an equilibrium. Thus, we will learn equilibria through fictitious play \cite{elie2020convergence}.

\subsection{Cooperative control setting}
In contrast to the selfish, competitive setting, in a cooperative setting it may be of interest to minimize the average processing time of all users. One may formulate a centralized optimization problem as
\begin{subequations} \label{eq:exactcoop}
    \begin{align}
    &\! \min_{X_1, \ldots, X_N} &\quad& \frac 1 N \sum_{i=1}^{N} X_i (T^{\mathrm{tx}}_i + T^{\mathrm{off}}_i) + (1-X_i) T^{\mathrm{loc}}_i \\
    &\text{subject to} & & X_i \in \{0, 1\} \qquad \forall i \in \{ 1,\ldots,N \} 
    \end{align}
\end{subequations}
under full information. However, again this problem is known to be difficult to solve exactly for large $N$, as it is a knapsack problem \cite{9322420}. Furthermore, we may again be interested in a decentralized solution, where each agent uses an independent policy, eliminating the need for centralized knowledge and only requiring knowledge of the local configuration $C^i$. Reformulating as optimization over decentralized policies $\pi_i$ and optimizing over the expected cost, we have
\begin{subequations} \label{eq:exactcooppi}
    \begin{align}
    &\! \min_{\pi^1, \ldots, \pi^N} &\quad& \mathbb E \left[ \frac 1 N \sum_{i=1}^{N} X_i (T^{\mathrm{tx}}_i + T^{\mathrm{off}}_i) + (1-X_i) T^{\mathrm{loc}}_i \right]\\
    &\text{subject to} & & \pi_i \colon \mathbb K \to [0,1] \qquad \forall i \in \{ 1,\ldots,N \} 
    \end{align}
\end{subequations}
where each offloading decision $X_i \sim \mathrm{Bernoulli}(\pi_i(C_i))$ follows from the policy $\pi_i$.

As $N \to \infty$, under the policy $\pi$ for all agents, we can obtain the corresponding mean-field control problem, which is more tractable than directly solving the $N$-user system, given as
\begin{align}
&\! \min_{\pi} &\quad& \int X (\tilde T^{\mathrm{tx}} + \tilde T^{\mathrm{off}}) + (1-X) \tilde T^{\mathrm{loc}} \, \mathrm d(\mu_0 \otimes \pi)
\end{align}
again with the previous definitions. Note that although we impose a shared, common policy $\pi$, sharing a policy across all agents will indeed be sufficient for optimality \cite{9683749}.

Since the problem is now reduced to the choice of $\pi \colon \mathbb K \to [0,1]$, the combinatorial optimization problem has been reduced to optimization over a bounded function $\pi$ with complexity independent of $N$. If we further assume that $\mu_0$ has finite support, i.e. $K \coloneqq |\mathbb K| < \infty$ and
\begin{align}
    \mu_0 = \sum_{j=1}^K p_j \delta_{(W_j, L_j, f_j, R_j)}, \quad \sum_{j=1}^K p_j = 1,
\end{align}
for some $p_j \geq 0$, $(W_j, L_j, f_j, R_j) \in \mathbb K$, then we obtain
\begin{align*}
    &\int X (\tilde T^{\mathrm{tx}} + \tilde T^{\mathrm{off}}) + (1-X) \tilde T^{\mathrm{loc}} \, \mathrm d(\mu_0 \otimes \pi) \\
    &= \sum_{j=1}^K p_j \pi_j \left( \frac{W_j}{R_j} + \frac{L_j \sum_{k=1}^K p_k \pi_k}{f_{\mathrm{per}}} \right) + p_j (1-\pi_j) \frac{L_j}{f_j} \\
    &= \sum_{j=1}^K \sum_{k=1}^K \frac{p_j p_k L_j}{f_{\mathrm{per}}} \pi_j \pi_k + \sum_{j=1}^K \left( \frac{p_j W_j}{R_j} - \frac{p_j L_j}{f_j} \right) \pi_j + \sum_{j=1}^K  \frac{p_j L_j}{f_j} \\
    &= \boldsymbol \pi^T \boldsymbol Q \boldsymbol \pi + \boldsymbol c^T \boldsymbol \pi + \mathrm{const.}
\end{align*}
for $\boldsymbol \pi \equiv (\pi_1, \ldots, \pi_K)^T$, $\pi_j = \pi(W_j, L_j, f_j, R_j)$ and appropriate $\boldsymbol Q$, $\boldsymbol c$. Therefore, we obtain a non-convex quadratic program
\begin{subequations} \label{eq:MFC}
    \begin{align}
    &\! \min_{\pi_1, \ldots, \pi_K} &\quad& \boldsymbol \pi^T \boldsymbol Q \boldsymbol \pi + \boldsymbol c^T \boldsymbol \pi \\
    &\text{subject to} & & \pi_j \in [0, 1] \qquad \forall j \in \{ 1,\ldots,K \}
    \end{align}
\end{subequations}
with box constraints, which though NP-hard \cite{pardalos1991quadratic} in the cardinality of the support of $\mu_0$, can be solved numerically. Most importantly, the complexity remains independent of $N$, giving us a tractable solution for sufficiently small $K$. To handle more general densities $\mu_0$ with non-finite but compact support $\mathbb K$, we may discretize distributions and solve the resulting finite-support problem. As a result, we have obtained a tractable solution to the otherwise intractable offloading problem for many devices, as we will verify in the sequel.

\section{Time-Stationary Model} \label{sec:time}
While the previous model assumes an instantaneous problem where we let all users play a one-shot game, another important and interesting setting is to assume a continuous flow of tasks arriving over time. While a theoretically rigorous analysis of this setting is beyond the scope of our work, we nonetheless consider this setting at its time-stationary equilibrium and solve it numerically.

At all times, let the arrival process of tasks be given by a Poisson process with constant rate $\lambda N$, which is equivalent to Poisson arrival rates $\lambda$ for each of $N$ users. At equilibrium, in the limit there must be a time-stationary bandwidth per user $f_\mathrm{alloc}$ allocated to a user choosing to offload their task. This bandwidth is given by dividing the total processing power $f_\mathrm{pool} = N f_\mathrm{per}$ by the number of jobs in the system. Since the processing time for any offloaded job arriving at equilibrium is given by $T^{\mathrm{tx}} = \frac{W}{R}$ and $T^{\mathrm{off}} = \frac{L}{f_{\mathrm{alloc}}}$, the expected number of jobs in the system as $N \to \infty$ will be given by
\begin{align}
    \mathbb E \left[ N_\mathrm{tot} \right] = \lambda N \mathbb E \left[ X (T^{\mathrm{tx}} + T^{\mathrm{off}}) \right]
\end{align}
and is given by a sum of $N$ Poisson variables $N_\mathrm{tot}^i$, the numbers of jobs in the system from each user $i$. Therefore, by the central limit theorem, the fluctuations of $N_\mathrm{tot}$ are on the order of $O(\sqrt N)$, resulting in the allocated processing rate per user
\begin{multline}
    f_\mathrm{alloc} 
    = \frac{f_\mathrm{pool}}{\mathbb E \left[ N_\mathrm{tot} \right] + O(\sqrt N)}
    = \frac{f_\mathrm{per}}{\lambda \mathbb E \left[ X (T^{\mathrm{tx}} + T^{\mathrm{off}}) \right] + O(\frac{1}{\sqrt N})} \\
    \to \frac{f_\mathrm{per}}{\lambda \mathbb E \left[ \frac{XW}{R} \right] + \frac{\lambda \mathbb E \left[ XL \right]}{f_{\mathrm{alloc}}}}
\end{multline}
as $N \to \infty$, which for $f_\mathrm{alloc} \neq 0$ gives
\begin{align}
    f_\mathrm{alloc} = \frac{f_\mathrm{per} - \lambda \mathbb E \left[ XL \right]}{\lambda \mathbb E \left[ \frac{XW}{R} \right]}
\end{align}
and the natural constraint 
\begin{align}
    f_\mathrm{per} - \lambda \int XL \, \mathrm d(\mu_0 \otimes \pi) > 0.
\end{align}

Intuitively, this constraint formalizes the notion of sufficient MEC resources, i.e. the rate of assigned jobs times their complexity must not exceed the possible compute assigned per node, as otherwise the MEC servers will be unable to catch up with assigned tasks, resulting in no time-stationary solution. Note that this constraint is trivially fulfilled if
\begin{align}
    f_\mathrm{per} > \lambda \mathbb E \left[ L \right].
\end{align}

Optimizing the average waiting times of all agents in the cooperative case gives the MFC problem
\begin{align} \label{eq:MFC-homog}
&\! \min_{\pi \in [0, 1]^{\mathbb K}} \mathbb E \left[ X (T^{\mathrm{tx}} + T^{\mathrm{off}}) + (1-X) \tilde T^{\mathrm{loc}} \right]
\end{align}
where for finite $\mathbb K$ we have
\begin{align*}
    &\mathbb E \left[ X (T^{\mathrm{tx}} + T^{\mathrm{off}}) + (1-X) \tilde T^{\mathrm{loc}} \right] \\
    &\quad = \mathbb E \left[ \frac{XW}{R} + \frac{XL \lambda \mathbb E \left[ \frac{XW}{R} \right]}{f_\mathrm{per} - \lambda \mathbb E \left[ XL \right]} + (1-X) \frac{L}{f} \right] \\
    &\quad = \sum_{j=1}^K p_j \pi_j \left( \frac{W_j}{R_j} + \frac{L_j \lambda \sum_{k=1}^K \frac{p_k \pi_k W_k}{R_k}}{f_{\mathrm{per}} - \lambda \sum_{k=1}^K p_k \pi_k L_k} \right) + p_j (1-\pi_j) \frac{L_j}{f_j}.
\end{align*}

For the competitive MFG, we can analogously define an equilibrium as any fixed point policy $\pi^*$ such that 
\begin{align} \label{eq:MFG-homog}
    \pi^* \in \argmax_{\pi} \mathbb E_\pi \left[ \frac{XW}{R} + \frac{XL \lambda \mathbb E_{\pi^*} \left[ \frac{XW}{R} \right]}{f_\mathrm{per} - \lambda \mathbb E_{\pi^*} \left[ XL \right]} + (1-X) \frac{L}{f} \right].
\end{align}

\section{Theoretical Guarantees} \label{sec:theory}
In this section, we state a number of theoretical guarantees for the one-shot mean-field problems. Extensions to the time-stationary case are deferred to future work. The results follow from formulating the problem as certain standard mean-field game and mean-field control problems and applying existing results. In particular, note that our systems can be reformulated as standard mean-field games with action space $\{0, 1\}$ and state space $\mathbb K \cup (\mathbb K \times \mathbb U)$, see also \cite{saldi2018markov, 9683749}. For the competitive setting, as $N \to \infty$, the MFG equilibrium exists and will constitute an approximate Nash equilibrium.
\begin{theorem}
A solution $(\pi^*, \mu^*)$ of the MFG problem \eqref{eq:MFG} exists, and $\pi^*$ constitutes an $\epsilon_N$-Nash equilibrium of the finite $N$-user system with $\epsilon_N \to 0$ as $N \to \infty$.
\end{theorem}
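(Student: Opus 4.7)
The plan is to lift the one-shot game to a standard two-stage discrete-time mean-field game on the augmented state space $\mathbb K \cup (\mathbb K \times \mathbb U)$ with $\mathbb U = \{0,1\}$, and then invoke the existence and approximate-Nash theorems of \cite{saldi2018markov, 9683749} directly, as hinted by the authors. Concretely, I would take the initial distribution $\mu_0$ on $\mathbb K$ together with a deterministic transition $C \mapsto (C, X)$ recording the chosen action in the second coordinate. The terminal cost
\begin{align*}
c\bigl((C,X), \mu_1\bigr) = X\Bigl(\tfrac{W}{R} + \tfrac{L}{f_\mathrm{per}}\bar X(\mu_1)\Bigr) + (1-X)\tfrac{L}{f},
\end{align*}
where $\bar X(\mu_1)$ denotes the marginal mean of $X$ under $\mu_1$, reproduces $J^\mu(\pi)$ exactly, so a solution of \eqref{eq:MFG} coincides with a standard MFG equilibrium in this lifted system.

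For existence, I would verify the hypotheses of the cited general theorems: $\mathbb U$ is finite, $\mathbb K$ is compact (implicitly bounded away from $\{R = 0\}\cup\{f = 0\}$ so the cost is bounded), and $\mu_1 \mapsto \bar X(\mu_1)$ is weakly continuous, making $c$ jointly continuous. The best-response correspondence $\mu \mapsto \operatorname{BR}(\mu)$ is then nonempty, convex-valued (by randomising over cost ties) and upper hemicontinuous on the compact convex set of admissible mean-fields, so Kakutani--Fan--Glicksberg yields a fixed point $(\pi^*, \mu^*)$.

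For the $\epsilon_N$-Nash claim, I would run the standard propagation-of-chaos argument. If all $N$ agents follow $\pi^*$, the empirical measure $\hat\mu_1^N \coloneqq \tfrac{1}{N}\sum_i \delta_{(C^i, X^i)}$ converges weakly to $\mu^*$ at rate $O(N^{-1/2})$ by the central limit theorem, and any single unilateral deviation perturbs $\hat\mu_1^N$ by at most $2/N$ in total variation. Combined with the Lipschitz dependence of $c$ on the scalar $\bar X(\mu_1)$, this bounds the gain from any deviating policy by $\epsilon_N = O(N^{-1/2})$, which tends to zero as $N \to \infty$.

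The main obstacle will be verifying the continuity and compactness prerequisites under the raw problem data, since $T^{\mathrm{tx}}$ and $T^{\mathrm{loc}}$ blow up on $\{R = 0\}\cup\{f = 0\}$; restricting $\mathbb K$ to a compact set bounded away from this singular locus resolves this, after which both conclusions fall out of the general mean-field framework of \cite{saldi2018markov, 9683749}.
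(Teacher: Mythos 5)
Your proposal follows essentially the same route as the paper: the paper's proof is a direct citation of \cite[Theorem~4.1]{saldi2018markov} after noting the reformulation as a standard MFG with action space $\{0,1\}$ and augmented state space $\mathbb K \cup (\mathbb K \times \mathbb U)$, which is exactly the lifting you construct before invoking (and partially re-deriving, via Kakutani--Fan--Glicksberg and propagation of chaos) the same cited results. Your added caveat about keeping $\mathbb K$ bounded away from $\{R=0\}\cup\{f=0\}$ so the cost stays bounded and continuous is a sensible explicit statement of what the paper leaves implicit in assuming $\mathbb K$ compact.
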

\begin{proof}
See \cite[Theorem~4.1]{saldi2018markov}.
\end{proof}

Furthermore, it is known that the fictitious play algorithm will converge in terms of exploitability, giving us the desired approximate Nash equilibrium.
\begin{theorem}
The exploitability of the solution of the fictitious play algorithm converges to zero.
\end{theorem}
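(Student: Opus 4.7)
The plan is to exploit the one-dimensional nature of the mean-field coupling in \eqref{eq:MFG} to reduce fictitious play to a scalar iteration on $[0,1]$, and then to invoke the convergence framework of \cite{elie2020convergence}.

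First, I would note that $J^\mu(\pi)$ depends on $\mu$ only through the scalar offloading rate $m(\mu) := \int X\,\mathrm d\mu \in [0,1]$, so the best response against any $m$ is the pointwise threshold $\pi^*(W,L,f,R;m) = \mathbf{1}\{W/R + Lm/f_{\mathrm{per}} < L/f\}$, and the fixed-point equation \eqref{eq:MFG} collapses to the scalar equation $m = \phi(m)$ with $\phi(m) := \int \pi^*(\cdot;m)\,\mathrm d\mu_0$ non-increasing in $m$ (since raising $m$ only ever worsens the offloading option for each type).

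Second, I would track fictitious play through this reduction: playing the greedy response $\pi^{(k)} = \pi^*(\cdot;\bar m^{(k-1)})$ against the empirical mean-field produces the scalar Ces\`aro recursion $\bar m^{(k)} = (1-\tfrac{1}{k})\bar m^{(k-1)} + \tfrac{1}{k}\phi(\bar m^{(k-1)})$. For a non-increasing, right-continuous $\phi$ on the compact interval $[0,1]$, a standard ODE-method argument then shows $\bar m^{(k)} \to m^\ast$, where $m^\ast$ is the fixed point of $\phi$ guaranteed by Theorem~1; indeed, the associated ODE $\dot m = \phi(m) - m$ has $m^\ast$ as its unique globally attracting equilibrium because $\phi(m) - m$ is strictly decreasing. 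Continuity of $J^\mu(\pi)$ in both arguments then transfers this scalar convergence to convergence of the exploitability of the averaged policy $\bar\pi^{(k)}$ to zero, which is the scalar content of the fictitious play theorem in \cite{elie2020convergence}.

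The main obstacle will be to formally cast our one-shot, continuous-type MFG within the multi-stage dynamic formalism of \cite{elie2020convergence} and to check their standing hypotheses. Most of this is routine: we view our problem as a one-step dynamic game on the compact type space $\mathbb K$ with action set $\{0,1\}$, and the needed continuity and boundedness follow from compactness of $\mathbb K$ together with strict positivity of $R$ and $f$. The only subtlety is the discontinuity of the threshold best response on the indifference set $\{W/R + Lm/f_{\mathrm{per}} = L/f\}$, which is harmless whenever $\mu_0$ places no mass on this set, or is resolved by a standard tie-breaking convention in the finite-support discretization \eqref{eq:MFC} used for computation.
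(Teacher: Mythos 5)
Your route is genuinely different from the paper's. The paper's proof is essentially a citation: it verifies that the game satisfies \cite[Assumption 1]{elie2020convergence} and, crucially, the Lasry--Lions monotonicity property (the offloading cost only increases when more agents offload), and then invokes \cite[Corollary 8.2]{elie2020convergence} to get convergence of fictitious play to the unique equilibrium. Your scalar reduction exploits exactly the same structural fact --- the coupling enters only through $m=\int X\,\mathrm d\mu$ and the best-response aggregate $\phi(m)$ is non-increasing --- but turns it into a self-contained analysis of the Ces\`aro recursion $\bar m^{(k)}=(1-\tfrac1k)\bar m^{(k-1)}+\tfrac1k\phi(\bar m^{(k-1)})$, which is more elementary and more informative about the dynamics, at the price of having to redo the bookkeeping that the cited corollary packages away.

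There is, however, a genuine gap in how you handle the indifference set, and it is not the corner case you make it out to be. You identify the limit as ``the fixed point of $\phi$ guaranteed by Theorem~1,'' but Theorem~1 guarantees a mean-field equilibrium $(\pi^*,\mu^*)$ in \emph{randomized} policies; the pure-threshold map $\phi$ need not have any fixed point, precisely because it jumps at values of $m$ where some type is indifferent. This is the generic situation for the finite-support instances the paper actually computes: the reported equilibrium $\pi^*\approx(1,0.65,0)$ has one type mixing, so at $m^*$ the set $\{W/R+Lm^*/f_{\mathrm{per}}=L/f\}$ carries positive $\mu_0$-mass, and no tie-breaking convention makes $m^*$ a fixed point of $\phi$ --- any pure tie-break overshoots or undershoots the diagonal. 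Consequently your ODE $\dot m=\phi(m)-m$ has a discontinuous right-hand side with no classical equilibrium, and the argument must be restated either for the convexified best-response correspondence (a differential inclusion / Filippov treatment, where monotonicity of the correspondence gives the unique zero) or as a direct monotonicity argument showing the recursion oscillates into $m^*$ with step sizes $O(1/k)$. Relatedly, the final step ``continuity of $J$ transfers scalar convergence to vanishing exploitability'' is too quick: $J^{\mu_0\otimes\bar\pi^{(k)}}(\bar\pi^{(k)})$ depends on the full averaged policy, not only on $\bar m^{(k)}$, so you must argue separately that (i) for types strictly non-indifferent at $m^*$ the greedy responses are eventually constant and Ces\`aro averaging washes out the prefix, and (ii) for indifferent types the cost gap between the two actions is bounded by $L|\bar m^{(k)}-m^*|/f_{\mathrm{per}}\to 0$, so their contribution to exploitability vanishes regardless of how $\bar\pi^{(k)}$ mixes. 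With these repairs your argument goes through; as written, the claim that the discontinuity is ``harmless or resolved by tie-breaking'' is false in the cases that matter, whereas the paper's citation to \cite[Corollary 8.2]{elie2020convergence} covers mixed equilibria directly.
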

\begin{proof}
The system fulfills \cite[Assumption 1]{elie2020convergence} and in particular the monotonicity property, since the offloading cost only increases when more agents offload. Therefore, by \cite[Corollary 8.2]{elie2020convergence}, we have convergence of the fictitious play algorithm to the unique mean-field equilibrium.
\end{proof}

Similarly, the cooperative MFC solution has an optimal solution, which will constitute an approximate Pareto optimum in the finite user system.
\begin{theorem}
For distributions $\mu_0$ with finite support, an optimizer $\pi^*$ of \eqref{eq:MFC} exists, and $\pi^*$ constitutes an $\epsilon_N$-Pareto optimum of the finite $N$-user system with $\epsilon_N \to 0$ as $N \to \infty$.
\end{theorem}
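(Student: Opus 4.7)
The plan is to split the argument into \emph{existence} of the MFC optimizer and the \emph{approximation property} in the finite $N$-user system. Both pieces are special cases of standard finite-support mean-field-control results, but the intuition is transparent in this setting.

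\textbf{Existence.} Under the finite-support assumption the MFC objective is, as derived just above \eqref{eq:MFC}, the continuous quadratic form $\boldsymbol \pi^T \boldsymbol Q \boldsymbol \pi + \boldsymbol c^T \boldsymbol \pi$ on the compact convex box $[0,1]^K$. Weierstrass's extreme value theorem immediately yields a minimizer $\pi^* \in [0,1]^K$; non-convexity of $\boldsymbol Q$ is immaterial, as only existence is claimed.

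\textbf{Approximate Pareto optimality.} Write $V_\infty(\pi)$ for the MFC objective in \eqref{eq:MFC}, $V_\infty^{*} = V_\infty(\pi^*)$, and $V_N(\pi_1, \ldots, \pi_N)$ for the finite-$N$ cooperative cost in \eqref{eq:exactcooppi}. The goal is $V_N(\pi^*, \ldots, \pi^*) - \inf V_N \le \epsilon_N$ with $\epsilon_N \to 0$. First, substituting $\pi_i \equiv \pi^*$ in \eqref{eq:exactcooppi} together with $T^{\mathrm{off}}_i = L^i \bar X_N / f_{\mathrm{per}}$ for $\bar X_N = N^{-1}\sum_j X_j$, and using that the $C^i$ are i.i.d.\ from $\mu_0$ while the $X_j$ are conditionally independent Bernoulli, the cross expectations $\mathbb E[X_i X_j L^i]$ factor as $\mathbb E[X_i L^i]\mathbb E[X_j]$ for $i \ne j$; the $N$ diagonal terms contribute only an $O(1/N)$ correction, uniformly because $\mathbb K$ is bounded, yielding $V_N(\pi^*,\ldots,\pi^*) = V_\infty^{*} + O(1/N)$. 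Second, for any profile $(\pi_1, \ldots, \pi_N)$ define the symmetrized policy $\bar \pi = N^{-1}\sum_i \pi_i \in [0,1]^K$. The linear part of $V_N$ depends on $(\pi_i)$ only through $\bar \pi$, and an analogous expansion of the quadratic coupling term gives
\begin{equation*}
    V_N(\pi_1, \ldots, \pi_N) \;=\; V_\infty(\bar \pi) + O(1/N) \;\ge\; V_\infty^{*} - O(1/N),
\end{equation*}
since $\bar\pi \in [0,1]^K$ implies $V_\infty(\bar\pi)\ge V_\infty^{*}$. Combining the two bounds yields $V_N(\pi^*, \ldots, \pi^*) - \inf V_N = O(1/N) \to 0$, so $\pi^*$ is an $\epsilon_N$-Pareto optimum.

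\textbf{Main obstacle and alternative.} The delicate step is making the $O(1/N)$ bound \emph{uniform} over arbitrary asymmetric profiles $(\pi_i)_{i=1}^N$; this is a discrete-time propagation-of-chaos estimate which relies on the finite support of $\mu_0$ to keep the constants involving $L$, $W/R$, and $L/f$ bounded. Alternatively, and more concisely, one may observe that the offloading system fits the standard mean-field-control framework of \cite{9683749} with action space $\{0,1\}$ and state space $\mathbb K$, and invoke the approximate-optimality theorem there directly, bypassing the explicit symmetrization calculation.
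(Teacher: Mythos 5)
Your proposal is correct, and it subsumes the paper's proof while adding more. The existence half is identical: the paper also just invokes the extreme value theorem for the continuous quadratic objective on the compact box $[0,1]^K$. For approximate Pareto optimality, the paper's entire argument is the citation you mention only as an "alternative" — it formulates the problem as a standard finite-support MFC and points to \cite[Corollary~1]{9683749} — whereas your primary route is an explicit symmetrization/law-of-large-numbers computation: factoring $\mathbb E[X_i L^i X_j]$ for $i \neq j$ by conditional independence, absorbing the $N$ diagonal terms into an $O(1/N)$ error (uniform thanks to the finite support of $\mu_0$), and comparing an arbitrary profile $(\pi_1,\ldots,\pi_N)$ to the mixed policy $\bar\pi = N^{-1}\sum_i \pi_i$, for which $V_\infty(\bar\pi)\ge V_\infty^*$. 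This calculation is sound and buys a self-contained, quantitative $O(1/N)$ rate that the paper does not spell out; the citation buys brevity and a result stated for the general MFC framework. One caveat you should make explicit if you keep the hands-on version: your symmetrization argument quantifies only over decentralized profiles as in \eqref{eq:exactcooppi}, where each $X_i$ depends on $C^i$ alone, so the factorization of cross terms holds; if the $\epsilon_N$-Pareto claim is meant against the fully centralized benchmark \eqref{eq:exactcoop}, where decisions may depend on all configurations, the cross expectations no longer factor and you need an additional concentration argument (or the cited corollary) to close that gap.
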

\begin{proof}
Existence is trivially guaranteed by the extreme value theorem, since the objective is a continuous function of $\pi \in [0, 1]^K$, and $[0, 1]^K$ is compact. For approximate Pareto-optimality, see \cite[Corollary~1]{9683749}.
\end{proof}

\begin{figure}[t]
    \centering
    \vspace{0.05cm}
    \includegraphics[width=\linewidth]{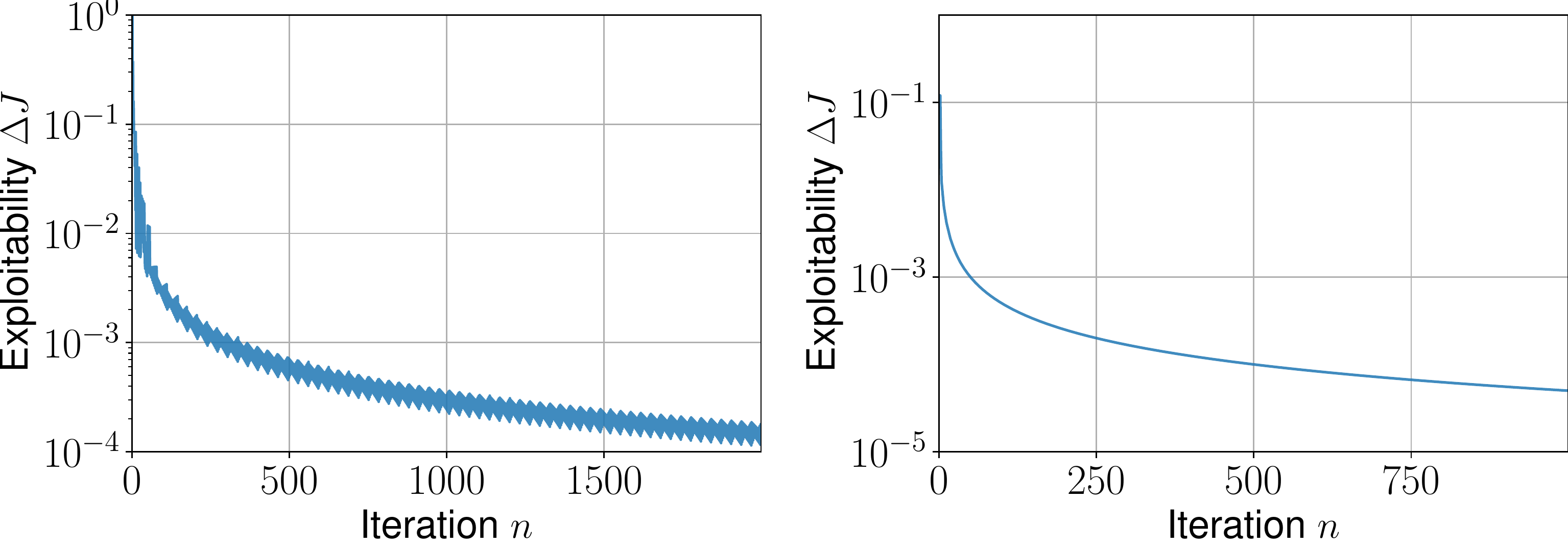}
    \caption{Learning curve for the exploitability $\Delta J$ in the competitive MFG problem \eqref{eq:MFG} (left) over $5000$ iterations $n$ using fictitious play. The fictitious play algorithm quickly converges to the equilibrium $\pi^* \approx (1, 0.65, 0)$. Here, we used $(p_1, p_2, p_3) = (0.2, 0.4, 0.4)$, $f_\mathrm{per} = 0.5$ and $(C_i)_{i=1,2,3} = ((1, 1, 1, 20), (3, 2, 1, 20), (5, 3, 1, 20))$. Similar results are achieved in the time-stationary MFG problem \eqref{eq:MFG-homog} (right), converging to the equilibrium $\pi^* \approx (1, 0.65, 0)$ for $(p_1, p_2, p_3) = (0.2, 0.4, 0.4)$, $f_\mathrm{per} = 0.5$, $\lambda \approx 0.225$ and $(C_i)_{i=1,2,3} = ((1, 1, 5, 10), (3, 2, 5, 10), (5, 3, 5, 10))$.}
    \label{fig:comp}
\end{figure}

\begin{figure}[b]
    \centering
    \includegraphics[width=\linewidth]{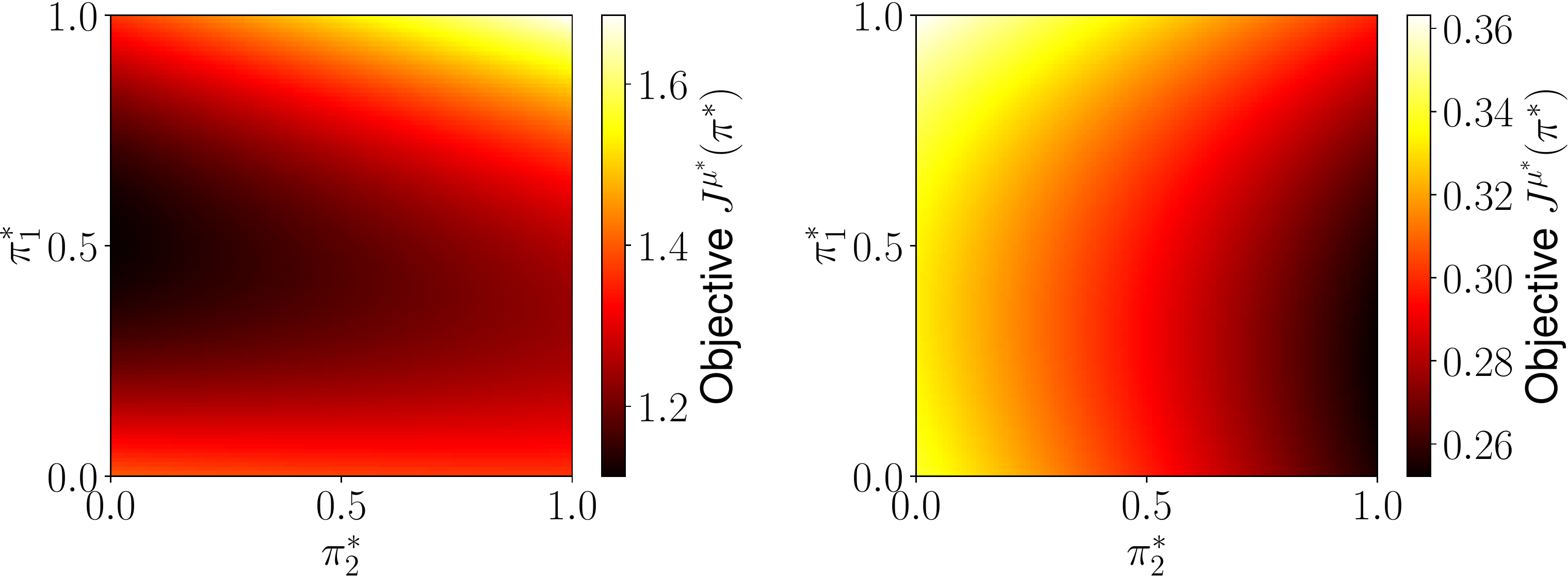}
    \caption{Exemplary 2D-case for the quadratic MFC problem \eqref{eq:MFC} (left) and $K=2$, reaching the optimal objective of around $1.26$ at around $\pi^* \approx (0.52, 0)$. Here, we used $(p_1, p_2) = (0.8, 0.2)$, $f_\mathrm{per} = 3$ and $(C_i)_{i=1,2} = ((3, 5, 3, 10), (1.5, 1.5, 5, 25))$. Similar results are achieved for the time-stationary MFC problem \eqref{eq:MFC-homog} (right), where we achieve the optimal value $0.25$ at $\pi^* \approx (0.24, 1)$ for $(p_1, p_2) = (0.8, 0.2)$, $f_\mathrm{per} = 3$, $\lambda = 0.6$ and $(C_i)_{i=1,2} = ((3, 1.5, 5, 12), (1.5, 1, 2, 20))$.}
    \label{fig:coop}
\end{figure}

\section{Numerical Simulation}
In this section, we present numerical simulations for the systems established in the prequel. For the quadratic program MFC, we could apply convex quadratic program solvers if the problem is indeed convex. However, in general the MFC problem may be non-convex and thus results in a NP-hard problem \cite{pardalos1991quadratic}. Still, we again stress that the complexity scales only with the size of $\mathbb K$ and remains independent of the number of users $N$. Therefore, our formulation will be of lower complexity than solving the finite user model for large systems. For space reasons, we do not compare run times, since finite model solvers will trivially exceed the run time of our solution for sufficiently large systems. We may follow any global optimization algorithm, and for simplicity we apply a simple grid search, though more sophisticated algorithms such as Bayesian optimization can easily be substituted. 

\begin{figure}[t]
    \centering
    \vspace{0.05cm}
    \includegraphics[width=\linewidth]{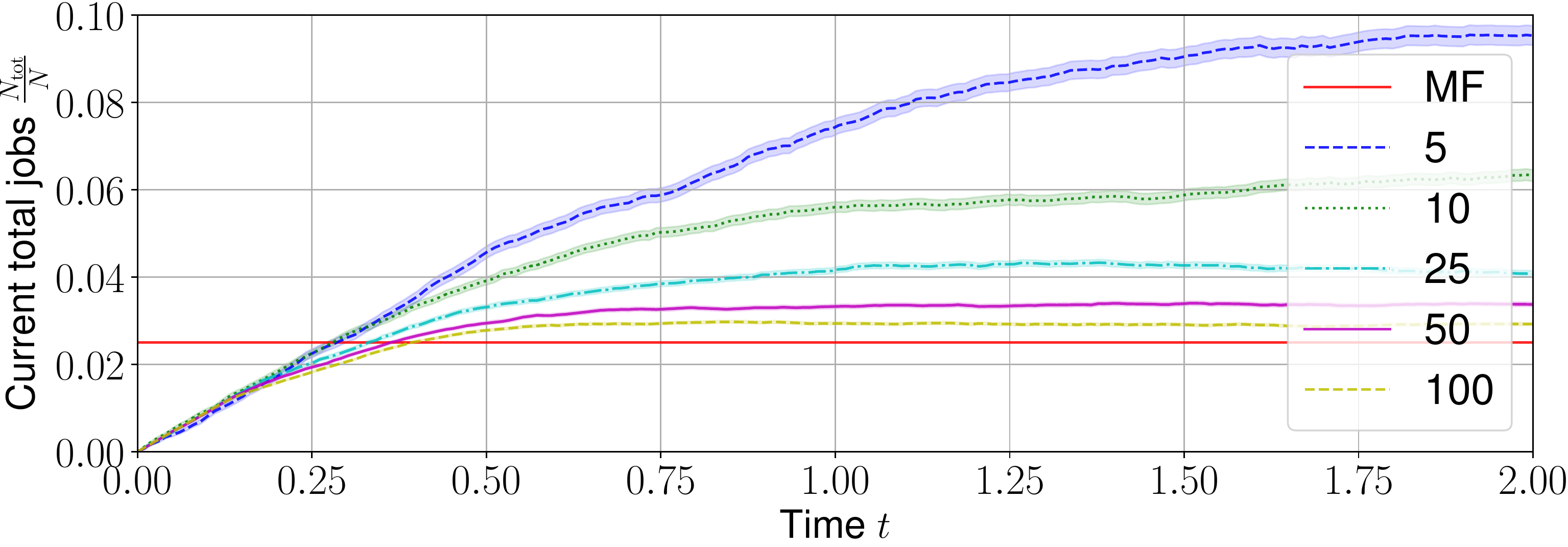}
    \caption{The evolution of the expected total number of jobs in the queue divided by $N$ with $68\%$ confidence interval, plotted for the configuration from Fig.~\ref{fig:comp} and various $N$ from $5$ to $100$, compared against the stationary mean-field solution (MF). We average over $5000$ sample trajectories. As we consider increasingly large systems, the expected rescaled number of jobs in the system converges to the limiting mean-field description, letting us conclude that the limiting mean-field system is a good approximation for the finite user system.}
    \label{fig:ergodic}
\end{figure}

\begin{figure}[t]
    \centering
    \includegraphics[width=\linewidth]{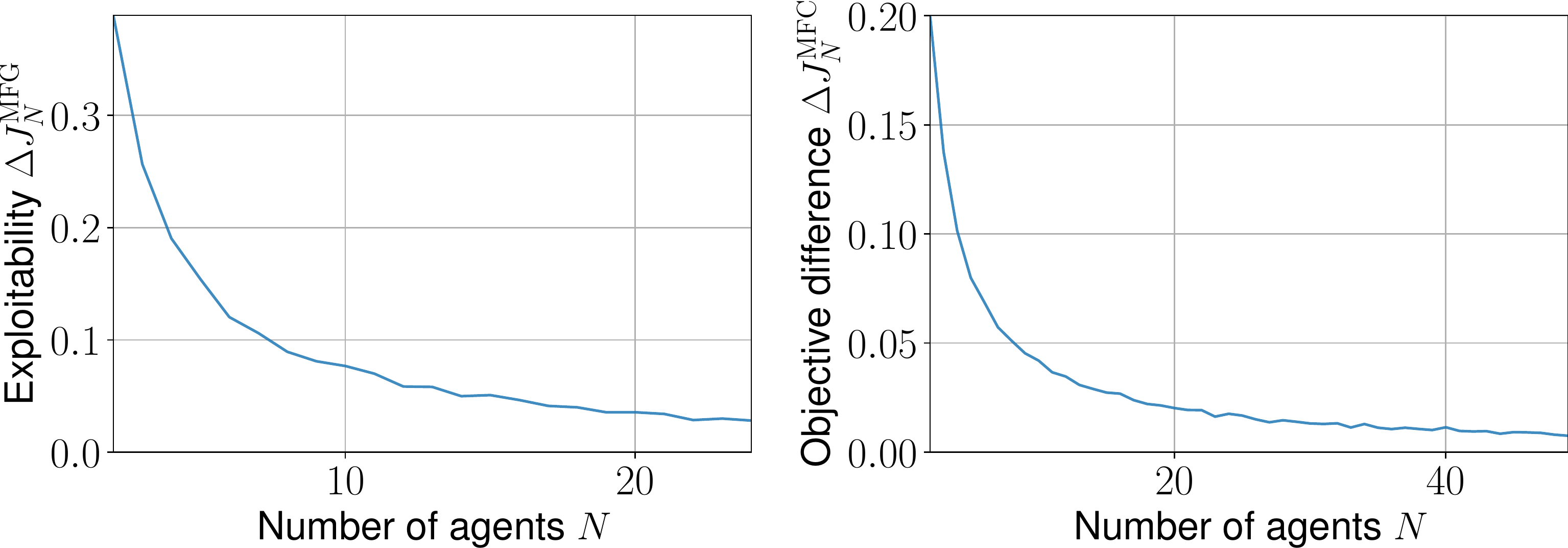}
    \caption{Comparison of $N$-agent exploitability in the MFG \eqref{eq:MFG} (left), and $N$-agent objective deviation from the limiting objective in the MFC \eqref{eq:MFC} (right) for the configurations used in Figures~\ref{fig:comp} and \ref{fig:coop}. The exploitability quickly decreases to zero, and similarly the cooperative problem is quickly well-approximated by the MFC.}
    \label{fig:coopncomp}
\end{figure}

As can be observed in Fig.~\ref{fig:comp}, for the competitive MFG problem \eqref{eq:MFG}, the fictitious play algorithm 
\begin{align}
    \pi_{n+1} \equiv \frac 1 {n+1} \left( n \pi_{1:n} + \argmin_\pi J^{\mu_0 \otimes \pi_{1:n}}({\pi}) \right)
\end{align} 
with the past average policy $\pi_{1:n} \coloneqq \frac 1 n \sum_{m=1}^n \pi_m$ quickly converges in terms of the exploitability
\begin{align}
    \Delta J(\pi) \coloneqq \max_{\pi^*} J^{\mu_0 \otimes \pi}({\pi^*}) - J^{\mu_0 \otimes \pi}({\pi})
\end{align}
which must be equal to zero for an exact equilibrium. For the parameters given in Fig.~\ref{fig:comp}, the resulting equilibrium $\pi^* \approx (0, 0.81, 1)$ is intuitive: Only the second configuration splits between offloading and local computation at a ratio that equilibrates offloading and local computation time, since the second configuration has longer offloading times than the first, and longer local computation time than the third. The result are offloading decisions where each UE gains little by deviating.

In Fig.~\ref{fig:coop}, we can observe the cost function for an illustrative case where $K=2$. As can be seen in the example, the optimum computational offloading policy is reached at around $\pi^* \approx (0.52, 0)$. Similarly, a solution can be reached for the time-stationary problem at around $\pi^* \approx (0.24, 1)$. Here, we solve the problem for an illustrative 3D example in a few seconds, though similar results can easily be obtained for larger problems. Thus, we obtained nearly optimal offloading decisions, minimizing the average computation times.

In Fig.~\ref{fig:ergodic}, we can observe that the time-homogeneous problem empirically shows a number of jobs in the system that converges to the mean-field description when rescaled by $N$, leading us to the conclusion that the mean-field model we proposed is a good approximation to the finite user system as long as the system is sufficiently large.

Finally, in Fig.~\ref{fig:coopncomp}, we can observe (i) the exploitability in the competitive finite user system, i.e. the expected maximum gain by deviating to any other policy in \eqref{eq:exactcomp}, and (ii) the deviation of the objective \eqref{eq:exactcooppi} from the computed mean-field objective \eqref{eq:MFC} in the cooperative setting. Here we estimated the exploitability for each value of $N$ by taking the maximum over all pure policies $\pi \in \{0, 1\}^{\mathbb K}$ over $100 \,\, 000$ samples. Similarly, we estimated the deviation between \eqref{eq:exactcooppi} and \eqref{eq:MFC} over $20 \,\, 000$ samples of the finite user system. We observe that the exploitability and deviation of objectives tends to zero as the number of agents increases, showing that the mean-field solution solves the finite system well.

\section{Conclusion}
In this work, we have shown the general applicability of rigorous mean-field frameworks for both competitive and cooperative scenarios in offloading for edge-computing. In particular, we have shown that the mean-field approximation quickly becomes a good approximation and can reliably be solved with a complexity independent of the number of agents. As a result, we have obtained good and tractable solutions for large-scale, decentralized edge-computing systems. In future work, one could extend rigorous theoretical analysis to the time-stationary case. Other interesting directions could be an extension to Markov-modulated task arrival rates and thereby a non-time-stationary case, or an even more distributed setting with multiple separate limited-access MEC pools. Finally, an application to real systems may be of interest.



\bibliographystyle{IEEEtran}
\bibliography{IEEEabrv,references}

\end{document}